\tikzstyle{small node} = [draw, circle, fill = black, minimum size = 3pt, inner sep = 0pt]
\tikzstyle{black node} = [draw, circle, fill = black, minimum size = 5pt, inner sep = 0pt]
\tikzstyle{white node} = [draw, circle, fill = white, minimum size = 5pt, inner sep = 0pt]
\tikzstyle{normal} = [draw=none, fill = none, rectangle, minimum size =0]
\def\empty{}
		\def\arg{#2}%
		\def\arg{#1}%
		\def\argo{#2}%
\tikzset{
	nodes/.style n args={4}{
		draw ,circle,outer sep=0.5mm,
		/utils/set if not empty={/tikz/fill}{#1}{black},
		/utils/set if not empty={/tikz/minimum size}{#4}{5}
	}
}
\newtheorem{environment}{Environment}[section]
\newtheorem{lemma}[environment]{Lemma}
\crefname{lemma}{lemma}{lemmas}
\newtheorem{question}[environment]{Question}
\crefname{question}{question}{questions}
\newtheorem{corollary}[environment]{Corollary}
\crefname{corollary}{corollary}{corollaries}
\newtheorem{theorem}[environment]{Theorem}
\crefname{theorem}{theorem}{theorems}
\crefname{proposition}{proposition}{Propositions}
\crefname{conjecture}{conjecture}{Conjectures}
\crefname{example}{example}{examples}
\crefname{remark}{remark}{remarks}
\crefname{definition}{definition}{definitions}
\crefname{figure}{figure}{figures}
\crefname{chapter}{chapter}{chapters}
\crefname{section}{section}{sections}
\crefname{algorithm}{algorithm}{algorithms}
\crefname{notation}{notation}{notations}
\newtheorem{claim}[environment]{Claim}
\crefname{claim}{claim}{claims}
\crefname{enumi}{condition}{conditions}
\newenvironment{definition}
  {\pushQED{\qed}\definitionx}
  {\popQED\enddefinitionx}
\crefname{definitionx}{definition}{definitions}
\newcommand{\N}{\mathds{N}}
\renewcommand{\geq}{\geqslant}
\renewcommand{\leq}{\leqslant}
\def\cqedsymbol{\ifmmode$\lrcorner$\else{\unskip\nobreak\hfil
\penalty50\hskip1em\null\nobreak\hfil$\lrcorner$
\parfillskip=0pt\finalhyphendemerits=0\endgraf}\fi}
\def\lqedsymbol{\ifmmode$\lrcorner$\else{\unskip\nobreak\hfil
\penalty50\hskip1em\null\nobreak\hfil$\rule{1ex}{1ex}$
\parfillskip=0pt\finalhyphendemerits=0\endgraf}\fi} 
\newcommand{\cqed}{\renewcommand{\qed}{\cqedsymbol}}
\newcommand{\lqed}{\renewcommand{\qed}{\lqedsymbol}}
\title{Avoidable paths in graphs\thanks{The first author has been supported by the ANR project GrR ANR-18-CE40-0032. The second author has been supported by the ANR project GraphEn ANR-15-CE40-0009. The third author was supported by the European Research Council (ERC) under the European Union’s Horizon 2020 research and innovation programme (ERC Consolidator Grant DISTRUCT, grant agreement No 648527).}}
\author[1]{Marthe Bonamy}
\author[2]{Oscar Defrain}
\author[3]{Meike Hatzel}
\author[4]{Jocelyn Thiebaut}
\affil[1]{CNRS, LaBRI, Université de Bordeaux, France.}
\affil[2]{LIMOS, Université Clermont Auvergne, France.}
\affil[3]{LaS, Technische Universität Berlin, Germany.}
\affil[4]{LIRMM, Université de Montpellier, France.}
\date{\today}
\begin{document}

\maketitle

\begin{textblock}{20}(0, 13.1)
\includegraphics[width=40px]{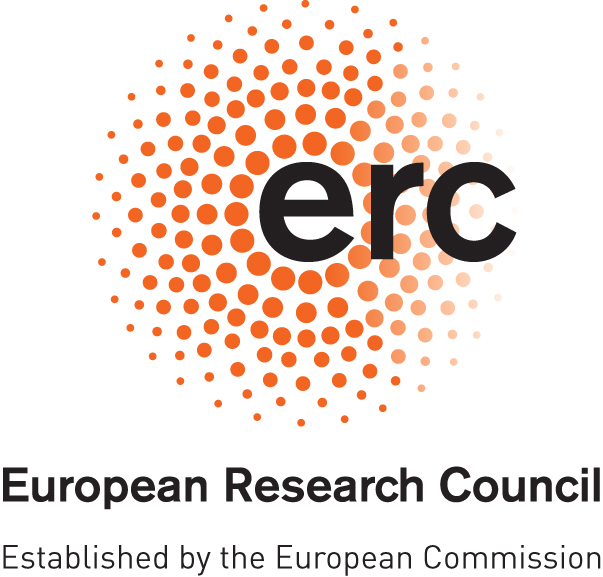}%
\end{textblock}
\begin{textblock}{20}(-0.35, 13.4)
\includegraphics[width=70px]{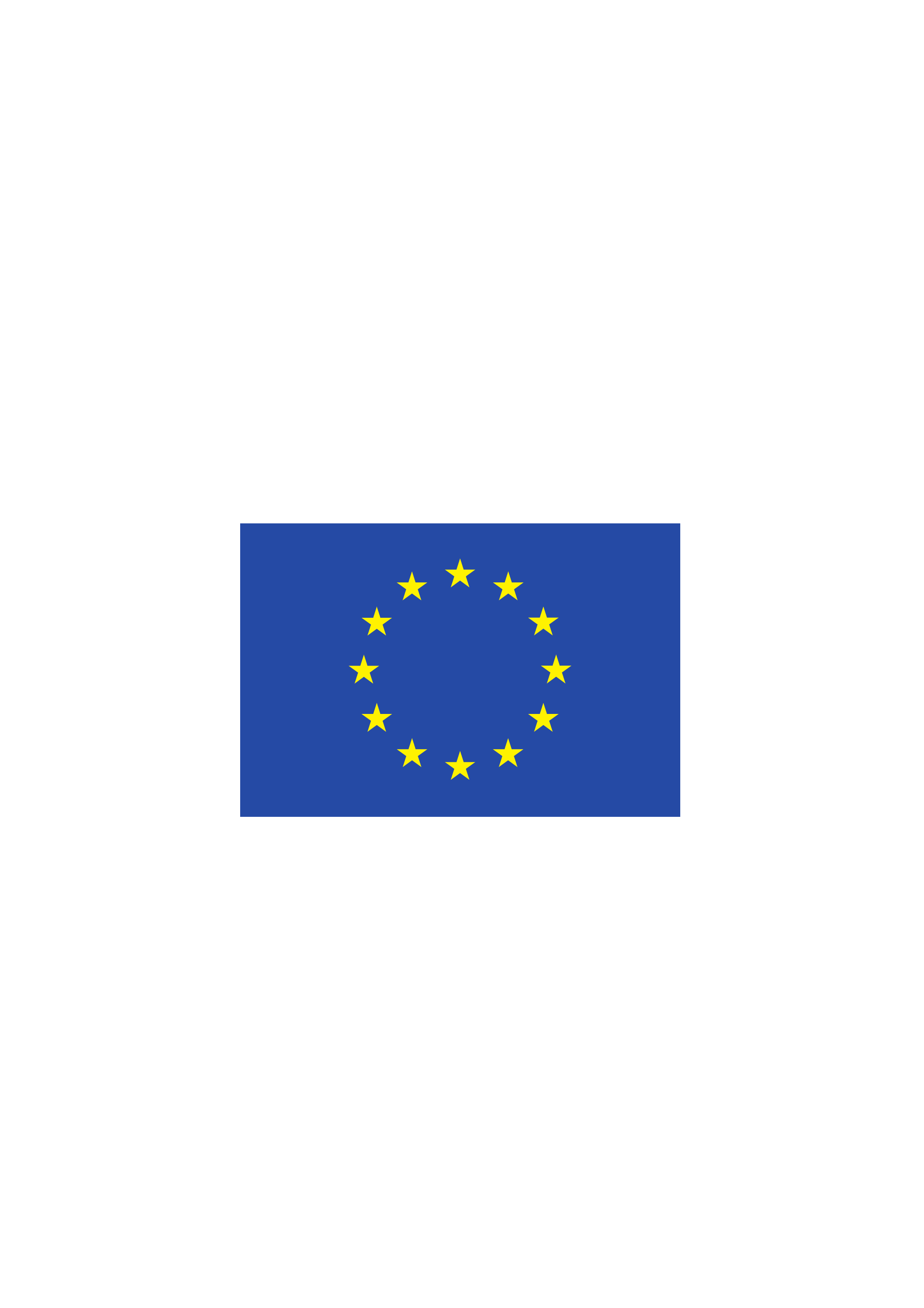}%
\end{textblock}

\begin{abstract}
    We prove a recent conjecture of Beisegel et al.~that for every positive integer $k$, every graph containing an induced $P_k$ also contains an avoidable $P_k$. 
    Avoidability generalises the notion of simpliciality best known in the context of chordal graphs.
    The conjecture was only established for $k \in \{1,2\}$ (Ohtsuki et al.~1976, and Beisegel et al.~2019, respectively).
    Our result also implies a result of Chv\'atal et al.~2002, which assumed cycle restrictions. 
    We provide a constructive and elementary proof, relying on a single trick regarding the induction hypothesis. 
    In the line of previous works, we discuss conditions for multiple avoidable paths to exist. 
\end{abstract}

\section{Introduction}\label{sec:intro}

A graph $G$ is \emph{chordal} if every induced cycle is of length three. 
A classical result of Dirac~\cite{dirac1961rigid} states that every chordal graph has a \emph{simplicial} vertex, that is, a vertex which neighbourhood is a clique.
However, not all graphs exhibit the nice structure of chordal graphs, and the statement does not extend to general graphs.

\subsection{From simplicial vertices to avoidable paths}

One way to generalise Dirac's result is through the following more flexible notion.

\begin{definition}[Avoidable vertex]\label{def:avoidable-vertex}
    A vertex $v$ in a graph $G$ is \emph{avoidable} if every induced path on three vertices with middle vertex $v$ is contained in an induced cycle in~$G$.
\end{definition}

Note that in a chordal graph, every avoidable vertex is simplicial. 
The next theorem can be inferred from~\cite{ohtsuki1976minimal,berry1998separability,aboulker2015vertex}; see also~\cite{beisegel2019avoidable} for a nice introduction.

\begin{theorem}\label{thm:avoidable-vertex}
    Every graph has an avoidable vertex.
\end{theorem}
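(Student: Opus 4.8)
The plan is to prove that every graph $G$ has an avoidable vertex by induction on the number of vertices, using a local-move argument on a vertex that is "extremal" in some well-chosen sense. First I would dispose of trivial cases: if $G$ has a universal vertex, or more generally a simplicial vertex, that vertex is vacuously avoidable since it has no induced $P_3$ with it as the middle vertex (its neighbourhood is a clique). So I may assume every vertex is the centre of some induced $P_3$.

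The main idea I would pursue is the following. Pick a vertex $x$ and look at a connected component $C$ of $G - N[x]$; if no such component exists then $x$ is simplicial, hence avoidable. Let $v$ be a vertex of $N(x)$ having a neighbour in $C$. The claim to aim for is that if $v$ is \emph{not} avoidable, witnessed by an induced path $a\,v\,b$ that lies in no induced cycle, then one can "push" towards a new vertex that is closer, in some potential measure, to being avoidable — for instance a vertex whose non-neighbourhood is strictly smaller, or which sees a strictly smaller component. Iterating, the process must terminate, and it terminates at an avoidable vertex. Concretely I would try to set up the induction so that the bad path $a\,v\,b$ forces $a$ and $b$ to lie on the same side, and then replace $x$ by one of them, or replace the component $C$ by a strictly smaller one inside it; a standard minimal-counterexample phrasing also works: take $v$ to be an avoidable vertex of the smaller graph $G - x$ for a suitably chosen $x$, and argue that avoidability is preserved when $x$ is added back, unless $x$ itself is avoidable.

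More precisely, the cleanest route is: let $x$ be any vertex, let $H$ be the subgraph induced by $x$ together with one component $C$ of $G-N[x]$ (so $H$ is connected and $x$ is \emph{not} simplicial in $H$ only if $C\neq\emptyset$), and apply induction to $H$ if $H\neq G$. Better yet, follow the classical ``two-step'' argument: choose $x$ arbitrary, let $C$ be a component of $G-N[x]$, let $G'=G-C$, obtain by induction an avoidable vertex $v$ of $G'$, and show $v$ is avoidable in $G$; if no component $C$ exists, $x$ is simplicial. The content is the verification that adding back $C$ cannot destroy avoidability of $v$: any induced $P_3$ centred at $v$ in $G$ either already lived in $G'$ (and its witnessing cycle survives) or uses a vertex of $C$, in which case one shows the two endpoints of the $P_3$ are separated appropriately by $N[x]$ and a cycle through $C$ can be built, possibly invoking induction once more on a smaller instance.

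The hard part, as usual in these arguments, is exactly this last verification: ensuring that when we reinsert the deleted component $C$, every newly-created induced $P_3$ centred at $v$ still closes up into an induced cycle, and that in doing so we do not accidentally create a chord. This requires care in choosing which vertex or which component to delete so that $v$ is ``shielded'' from $C$ in the right way — essentially one wants $v$ to have the property that its interaction with $C$ is controlled by a single separator. I expect to spend most of the proof making that shielding precise, and the remark in the introduction that the authors rely on ``a single trick regarding the induction hypothesis'' suggests the right move is to strengthen the inductive statement (e.g.\ to produce an avoidable vertex inside a prescribed set, or avoidable relative to a prescribed component) rather than to prove the bare existence statement directly.
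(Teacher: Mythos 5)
There is a genuine gap: what you have written is a plan whose decisive step is explicitly deferred. You correctly identify that the whole difficulty lies in showing that an avoidable vertex $v$ of the smaller graph $G'=G-C$ remains avoidable once the deleted component $C$ is reinserted, but you do not carry out that verification, and in the form you propose it actually fails. The induction hypothesis ``$G'$ has an avoidable vertex'' gives you no control over \emph{where} that vertex lies: $v$ may sit in $N(x)$ with neighbours in $C$, so reinserting $C$ creates new induced $P_3$'s centred at $v$ with one endpoint in $C$, and nothing forces these to close into induced cycles. This is exactly why the paper does not prove the bare statement by deleting a component; it proves the $k=1$ case of \cref{th:doubleinduction} via the strengthened property $H_R(G,k,u)$ (``there is an avoidable vertex of $G$ lying inside $G-N[u]$''), which pins the avoidable vertex away from the deleted part, combined with \cref{lem:mergingvertices}: when some vertex of $G-N[u]$ fails to dominate a neighbour $v$ of $u$, one \emph{merges} $u$ and $v$ and applies minimality to the merged graph. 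You gesture at ``strengthening the inductive statement, e.g.\ to produce an avoidable vertex inside a prescribed set,'' which is indeed the right instinct, but naming the need for a strengthening is not the same as supplying one: the specific strengthened statement, the domination claim it yields, and the merging trick that makes the induction close are all absent, and they are the entire content of the proof.

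A smaller but real error: you assert that a universal vertex is ``vacuously avoidable'' and ``more generally'' simplicial. A universal vertex need not be simplicial, and need not be avoidable: in the star $K_{1,2}$ the centre is universal yet is the middle vertex of an induced $P_3$ contained in no cycle. The correct trivial case is only the simplicial one (equivalently, a vertex that is not the centre of any induced $P_3$), and of course a graph need not have such a vertex, so the trivial case cannot carry any weight in the argument.
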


Recently in~\cite{beisegel2019avoidable}, the authors considered a generalisation of the concept of avoidable vertices to edges, and extended \cref{thm:avoidable-vertex} to that notion.

\begin{definition}[Avoidable edge]\label{def:avoidable-edge}
    An edge $uv$ in a graph $G$ is \emph{avoidable} if every induced path on four vertices with middle edge $uv$ is contained in an induced cycle in~$G$.
\end{definition}

\begin{theorem}[Beisegel et al.~\cite{beisegel2019avoidable}]\label{th:edge}
    Every graph has an avoidable edge.
\end{theorem}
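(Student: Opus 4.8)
The plan is to argue by induction on $|V(G)|$, after first clearing away some configurations where an edge is avoidable for the cheap reason that it is the middle edge of \emph{no} induced $P_4$; call such an edge \emph{vacuously avoidable}. If $G$ has no edge there is nothing to prove. If some vertex $v$ has degree one, say $N(v)=\{u\}$, then $N(v)\setminus N[u]=\emptyset$ and $uv$ is vacuously avoidable. If $v$ is a universal vertex, then $N(u)\setminus N[v]=\emptyset$ for every neighbour $u$ of $v$, so $uv$ is vacuously avoidable; in particular every complete graph has one. Finally, an edge is avoidable in $G$ if and only if it is avoidable in the component of $G$ that contains it, since every induced $P_4$ through it, and every induced cycle through such a $P_4$, stays inside that component. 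So we may assume $G$ is connected, has minimum degree at least two, and is not complete; being connected and non-complete, $G$ then has a minimal separator.

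For the inductive step, fix a minimal separator $S$ of $G$ and two distinct \emph{full} components $C,C'$ of $G-S$ (a component is full if its neighbourhood is all of $S$; a minimal separator always has at least two). Put $H=G[S\cup C]$. Then $|V(H)|<|V(G)|$ since $C'$ is nonempty and disjoint from $S\cup C$, and $H$ has an edge since $S\neq\emptyset$ and every vertex of $S$ has a neighbour in $C$. By the induction hypothesis, $H$ has an avoidable edge $e=xy$. I expect that, to make the next step work, one must carry a strengthened statement through the induction --- e.g.\ that $H$ has an avoidable edge with an endpoint (or both endpoints) in the chosen full component $C$ --- which is exactly the kind of ``trick regarding the induction hypothesis'' the introduction advertises; the base configurations above (a universal vertex of $H$, etc.) are what make such a strengthening available.

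It remains to show $e=xy$ is avoidable in $G$, and this is the heart of the matter. Since $H$ is an induced subgraph of $G$, every induced cycle of $H$ is an induced cycle of $G$, so any induced $P_4$ with middle edge $xy$ that lies inside $H$ is already in an induced cycle of $G$. Hence consider an induced $P_4$, say $a-x-y-b$, with middle edge $xy$ and using a vertex outside $H$; as $x,y\in H$ this vertex is $a$ or $b$, it lies in a component of $G-S$ other than $C$, and consequently its neighbour on the path (one of $x,y$) lies in $S$. We must find an induced cycle of $G$ through this $P_4$. The natural route mirrors the standard proof of \cref{thm:avoidable-vertex}: close the cycle by running an induced path from $b$ back to $a$ through the full component $C$, using that every vertex of $S$ has a neighbour in $C$, and then checking that no chords are created --- a case analysis according to whether one or both of $a,b$ fall outside $H$ and whether their $S$-neighbours already meet $C$. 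Nailing this chord-freeness down, and choosing the strengthened induction hypothesis so that $x$ or $y$ can be forced into $C$ and thereby kill the worst cases, is where essentially all the difficulty sits; the remaining bookkeeping (exhaustiveness of the cases, and that the constructed cycles really contain the given $P_4$) is routine.
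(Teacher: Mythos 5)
Your write-up stops at exactly the point you yourself flag as ``the heart of the matter'', and what is missing is not routine bookkeeping. Two concrete problems. First, the strengthened induction hypothesis is never actually formulated, and the candidate you float --- an avoidable edge of $H=G[S\cup C]$ with \emph{both} endpoints in the full component $C$ --- is unattainable in general: take $G=P_3$ with $S$ the middle vertex, so that $C$ is a single vertex and $H$ has no edge inside $C$ at all. With only one endpoint forced into $C$, the case $x\in C$, $y\in S$, and $b$ in some other component $D$ of $G-S$ survives, and there you must exhibit an induced $a$--$b$ path internally avoiding $N[x]\cup N[y]$. The avoidability of $xy$ in $H$ gives you no purchase on this, because the path $axyb$ does not lie in $H$; and routing the return path ``through the full component $C$'' is exactly the wrong place to go, since $x\in C$ and the path must dodge $N[x]$. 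Nothing in your setup (the component $D$ need not be full, $a$ may sit deep inside $C$ near $x$, etc.) controls this situation, so the claimed cycle construction is an open gap, not a case analysis waiting to be written out.

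Second, this is not the route the paper takes, so you cannot lean on the paper's advertised ``trick'' to fill the hole: there are no separators anywhere in the paper's argument. \Cref{th:edge} is obtained as the case $k=2$ of \cref{thm:existsavoidablePk}, which in turn follows from \cref{th:doubleinduction}. That proof runs a minimal-counterexample argument on \emph{two} properties simultaneously: the basic property $H_B(G,k)$ and the refined property $H_R(G,k,u)$ of \cref{def:HR}, which asks for an avoidable $P_k$ of $G$ contained in $G-N[u]$. Heredity is obtained not by cutting along a separator but by \emph{merging} two adjacent vertices (\cref{lem:mergingvertices}); this yields the domination claims (every $P_k$ in $G-N[u]$ dominates $N(u)$, and every $P_k$ dominates $V(G)$), after which a failing extension $xQy$ is contradicted by deleting the neighbour of $y$ from $Q$ and appending $x$. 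If you want to salvage your approach you would have to prove the separator-based strengthening from scratch; the cleaner path is to adopt the $H_R$-plus-merging induction.
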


This notion naturally generalises to paths, as follows.

\begin{definition}[Extension]\label{def:extension}
    Given an induced path $P$ in a graph $G$, an \emph{extension} of $P$ is an induced path $xPy$ in $G$ for some vertices $x,y$.
\end{definition}

\begin{definition}[Failing]\label{def:failing}
    An induced path $P$ in a graph $G$ is \emph{failing} if there is no induced cycle of $G$ containing $P$.
\end{definition}

\begin{definition}[Avoidable]\label{def:avoidable-path}
    A path $P$ in a graph $G$ is \emph{avoidable} if it is induced and has no failing extension.
    Given a subgraph $G'$ of $G$, we say that $P$ is an avoidable path of $G$ in $G'$ if it is avoidable in $G$ and $V(P) \subseteq V(G')$.
\end{definition}

A graph $G$ is $P_k$-free if it does not contain a $P_k$, that is, an induced path on $k$ vertices.
In~\cite{beisegel2019avoidable} the authors conjecture that for every positive integer $k$, every graph either is $P_k$-free or contains an avoidable path on $k$ vertices.
This conjecture is motivated by the following result of Chv\'atal et al.~\cite{chvatal2002dirac}, which generalises Dirac's theorem. 
A $C_{\geq p}$-free graph is a graph where every induced cycle has at most $p-1$ vertices. 
The $C_{\geq 4}$-free graphs are exactly the chordal graphs. 
Unless specified otherwise, we consider cycles to be induced. 

\begin{theorem}[Chv\'atal et al.~\cite{chvatal2002dirac}]\label{thm:chvatal}
    For every positive integer $k$, every $C_{\geq k+3}$-free graph
    either is $P_k$-free or contains an avoidable path on $k$ vertices.
\end{theorem}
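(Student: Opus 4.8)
The plan is to strip away the cycle hypothesis as far as possible and reduce to a purely combinatorial statement about paths. The first step is the following reformulation: \emph{in a $C_{\ge k+3}$-free graph, an induced $P_k$ is avoidable if and only if it has no extension at all}, that is, there is no vertex pair $x,y$ with $xPy$ an induced $P_{k+2}$. Indeed, an induced cycle containing an extension $xPy$ would contain an induced path on $k+2$ vertices and hence have at least $k+2$ vertices; being $C_{\ge k+3}$-free it would have exactly $k+2$ vertices, so it would consist of $xPy$ together with the edge $xy$ --- impossible since $xPy$ is induced. Thus a \emph{failing} extension exists as soon as \emph{any} extension exists, and the statement reduces to: every $C_{\ge k+3}$-free graph with an induced $P_k$ contains an induced $P_k$ that cannot be grown into an induced $P_{k+2}$ by adding one vertex at each end. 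The cases $k=1$ and $k=2$ are handled by \cref{thm:avoidable-vertex} and \cref{th:edge} respectively, so I would assume $k\ge 3$.

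To produce such a path I would make an extremal choice. Let $\pi=v_1v_2\cdots v_m$ be a \emph{maximal} induced path with $m\ge k$, which exists since $G$ has an induced $P_k$. If $m=k$ then no vertex can be prepended to or appended to $\pi$, so $\pi$ itself is non-extendable and we are done; hence assume $m\ge k+1$ and look at the end segment $Q=v_1\cdots v_k$. Suppose $Q$ has an extension $x\,v_1\cdots v_k\,y$. Since $x\sim v_1$, $x\not\sim v_2,\dots,v_k$, and necessarily $x\notin V(\pi)$ (the only neighbour of $v_1$ on $\pi$ is $v_2$), maximality of $\pi$ forces $x$ to be adjacent to some $v_j$ with $j\ge k+1$ (otherwise $xv_1\cdots v_m$ would be a longer induced path); taking $j$ minimal, $xv_1v_2\cdots v_j x$ is an induced cycle on $j+1\le k+2$ vertices, so $j=k+1$. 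Hence every left-extension vertex of $Q$ lies in $N(v_1)\cap N(v_{k+1})$ and closes the induced $(k+2)$-cycle $xv_1\cdots v_{k+1}$; a symmetric argument at the far end of $\pi$ constrains the right-extension vertices of $v_{m-k+1}\cdots v_m$. The remaining work is to show these forced short cycles cannot coexist with such extensions, yielding either a longer induced path (contradicting maximality) or directly a non-extendable induced $P_k$ --- and, when neither is immediate, to pass to a strictly smaller induced subgraph, namely a connected component $C$ of $G-N[S]$ for a carefully chosen set $S$, in the style of the proof of \cref{thm:avoidable-vertex}: an avoidable path found inside $C$ lifts to an avoidable path of $G$, because every extension of a path contained in $C$ stays inside $G[N[C]]$.

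The step I expect to be the main obstacle is the asymmetry just seen: maximality of $\pi$ pins down the left-extension of the left segment but says nothing about its \emph{right}-extension, which attaches in the interior of $\pi$, so one cannot simply read a non-extendable $P_k$ off a single longest path. This is exactly the difficulty the present paper resolves for the full Beisegel et al.\ conjecture by strengthening the induction hypothesis (so that the avoidable $P_k$ is produced within a prescribed subgraph); that stronger statement implies the present theorem outright, with the $C_{\ge k+3}$-free hypothesis becoming superfluous. A self-contained argument for the theorem can be run along the same lines --- induction on $|V(G)|$, pivoting on a minimal separator-type set $S$ and carrying a hypothesis strong enough to place the returned path inside the recursed component --- the content lying in the choice of $S$ and the formulation of the hypothesis rather than in any lengthy computation.
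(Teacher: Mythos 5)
Your opening reduction is correct and matches a remark already in the paper: in a $C_{\geq k+3}$-free graph an induced cycle through an extension $xPy$ would need at least $k+3$ vertices, so every extension is failing and ``avoidable'' collapses to ``simplicial'' (no extension at all). The local analysis of a maximal induced path $\pi=v_1\cdots v_m$ is also sound as far as it goes: a left-extension vertex $x$ of $Q=v_1\cdots v_k$ must, by maximality and $C_{\geq k+3}$-freeness, be adjacent to $v_{k+1}$ and close an induced $(k+2)$-cycle. But at that point the proof stops. You yourself identify the obstacle (right-extensions of $Q$ attach in the interior of $\pi$ and are not controlled by maximality) and then resolve it only by gesture: ``pass to a connected component $C$ of $G-N[S]$ for a carefully chosen set $S$,'' with ``a hypothesis strong enough to place the returned path inside the recursed component.'' That choice of $S$ and that hypothesis \emph{are} the theorem; nothing in the proposal specifies them or verifies that the induction closes. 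Moreover, the one lifting step you do assert is not quite right as stated: a path lying in $C$ that is simplicial in $G[C]$ (the graph you would recurse on) can still have an extension in $G$, since the extending endpoints live in $N(C)\setminus C$; knowing the extension stays inside $G[N[C]]$ does not help unless the recursive call already guarantees non-extendability with respect to the \emph{whole} graph $G$. This is exactly why the paper's refined property $H_R(G,k,u)$ demands an avoidable $P_k$ \emph{of $G$} inside $G-N[u]$, and why it needs \cref{lem:mergingvertices} (merging an adjacent pair and recursing) to make that property hereditary; the paper then obtains \cref{thm:chvatal} as an immediate special case of \cref{thm:existsavoidablePk}. Without an explicit substitute for that machinery, your argument establishes the (correct) equivalence with simpliciality and some structure around one end of a longest path, but not the existence of the desired $P_k$.
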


In fact, \cref{thm:chvatal} originally states the existence of a simplicial path in the class of $C_{\geq k+3}$-free graphs.
A \emph{simplicial} path is an induced path with no extension: it is avoidable by vacuity. 
Note that these two definitions coincide in such a class, as no cycle on at most $k+2$ vertices can contain the extension of an induced path on $k$ vertices.

Here, we confirm the aforementioned conjecture~\cite[Conjecture 1]{beisegel2019avoidable}, as follows.

\begin{theorem}\label{thm:existsavoidablePk}%
    For every positive integer $k$, every graph either is $P_k$-free or contains an avoidable $P_k$.
\end{theorem}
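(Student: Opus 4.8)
The plan is to prove the statement by induction on $k$, with the base cases $k=1$ and $k=2$ given by \cref{thm:avoidable-vertex} and \cref{th:edge}. Fix $k \geq 3$ and a graph $G$ containing an induced $P_k$; I want to produce an avoidable $P_k$ in $G$. The natural idea is to take an induced $P_{k-1}$, say $Q = q_1 \cdots q_{k-1}$, that is avoidable in $G$ by the induction hypothesis, and then try to extend it on one side to an avoidable $P_k$. Extending $Q$ at the endpoint $q_{k-1}$ by a neighbour $x$ with $x q_1 \notin E(G)$ and $x$ non-adjacent to $q_2,\dots,q_{k-2}$ gives an induced $P_k$ whenever such an $x$ exists; the first subtlety is that such an extension need not exist, or every such $P_k$ might have a failing extension even though $Q$ had none, because adding the vertex $x$ opens up new failing extensions of the longer path that did not correspond to failing extensions of $Q$.

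The ``single trick regarding the induction hypothesis'' advertised in the abstract strongly suggests that the right move is to strengthen what we carry through the induction: instead of just ``$G$ has an avoidable $P_k$'', one should prove something like ``for every induced $P_{k-1}$ that is avoidable, and every suitable vertex set / every appropriate extension, there is an avoidable $P_k$ extending it, or sharing a long subpath with it, inside a controlled subgraph $G'$'' — this is exactly why \cref{def:avoidable-path} bothers to define ``avoidable path of $G$ in $G'$''. So the key steps I would carry out are: (i) formulate a strengthened statement $S(k)$ asserting that every induced $P_{k-1}$ avoidable in $G$ can be extended to (or modified into) an avoidable $P_k$ living in a prescribed subgraph, where the subgraph is obtained by deleting everything ``used up'' by the shorter path; (ii) check $S(2)$ directly from \cref{th:edge}, taking a little care that the subgraph bookkeeping holds; (iii) for the inductive step, start from an avoidable $P_{k-1}$ named $Q$, look at the component structure of $G$ minus the closed neighbourhood of the relevant endpoint(s) of $Q$, pick a well-chosen vertex $x$ to append, argue the resulting $P_k$ is induced, and then show any failing extension of it would contradict avoidability of $Q$ by projecting back down to a $P_{k-1}$.

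The main obstacle — and where the real content lies — is step (iii): showing that a failing extension of the new $P_k$ forces a failing extension of $Q$. A failing extension $x'\!\cdot\! x q_{k-1}\cdots q_1 \!\cdot\! y'$ of the new path $x q_{k-1}\cdots q_1$ contains $Q$ together with extra vertices, but it is not automatically an extension of $Q$ in the sense of \cref{def:extension} because $x$ sits between $q_{k-1}$ and $x'$, so one cannot simply forget $x$. The trick must be to choose $x$ so that any cycle through the long subpath can be ``rerouted'' through $x$ or around it, or to choose $Q$ and its endpoint using a minimality/extremality condition (e.g. choose the avoidable $P_{k-1}$ and the component of $G - N[q_1]$ so that something is minimal), so that the obstruction to closing a cycle propagates down one level. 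I would expect the proof to hinge on a clever choice of which endpoint to extend and which vertex $x$ to use, justified by picking a component of an appropriate vertex-deleted subgraph that is ``innermost'', and then a short case analysis on where a hypothetical failing extension's endpoints and chords lie, each case collapsing to a failing extension of $Q$ and contradicting the induction hypothesis.
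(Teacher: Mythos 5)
There is a genuine gap: your step (iii) is exactly the content of the theorem, and you do not carry it out --- you only describe what you ``would expect the proof to hinge on.'' Moreover, the plan itself is shaky before you even reach (iii). An avoidable $P_{k-1}$ need not extend to any $P_k$ at all (it may be simplicial, or lie in a part of the graph far from every $P_k$ --- already for $k=2$ an avoidable vertex can be isolated while the graph has edges elsewhere), so an induction on $k$ that starts from an arbitrary avoidable $P_{k-1}$ does not get off the ground without a substantial extra mechanism for choosing \emph{which} shorter avoidable path to use, and that mechanism is precisely what is missing. You also correctly observe that a failing extension $x'xQy'$ of the extended path is not an extension of $Q$, but the proposed remedies (rerouting cycles through $x$, an unspecified extremal choice of component) are not arguments.

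The paper's actual proof does not induct on $k$ at all; $k$ stays fixed throughout. The strengthened hypothesis is not ``every avoidable $P_{k-1}$ extends'' but the refined property $H_R(G,k,u)$ of \cref{def:HR}: for every vertex $u$, either $G-N[u]$ is $P_k$-free or $G$ has an avoidable $P_k$ whose vertices all lie in $G-N[u]$. The induction is on $|V(G)|$, and the engine is \cref{lem:mergingvertices}: merging two adjacent vertices $u_1,u_2$ into $u$ passes $H_R$ from the smaller graph back to $G$, because any induced cycle through the merged vertex can be expanded back using $u_1$, $u_2$, or the edge $u_1u_2$. In a minimal counterexample this forces every $P_k$ in $G-N[u]$ to dominate $N(u)$ (\cref{cl:PkdominatesNu}); then taking an avoidable $P_k$ of $G-N[u]$ (which exists by minimality) and a hypothetical failing extension $xQy$ with $y\in N(u)$, the path $xQ-z$ is a $P_k$ in $G-N[u]$ not dominating $y$, a contradiction. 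So the ``single trick'' is the neighbourhood-deletion strengthening plus vertex merging, not an extension of shorter avoidable paths; your proposal would need to be rebuilt around a different induction parameter and a concrete transfer lemma to become a proof.
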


In fact, we prove \cref{thm:existsavoidablePk} using a stronger induction hypothesis, in the exact same flavour as~\cite{chvatal2002dirac}, see \cref{th:doubleinduction} in \cref{sec:proof}.

\subsection{Consequences}\label{subsec:cons}

We point out that the proof of \cref{thm:existsavoidablePk} is self-sufficient, thus this supersedes the arguments for \cref{thm:avoidable-vertex,th:edge,thm:chvatal}.

By using ingredients of \cref{th:doubleinduction} (namely \cref{lem:mergingvertices}), we obtain a way to build more than one avoidable $P_k$.

\begin{corollary}\label{cor:connectedsubset}
    For every positive integer $k$, graph $G$ and subset $X \subseteq V(G)$ such that $G[X]$ is connected, either $G - N[X]$ is $P_k$-free or there is an avoidable $P_k$ of $G$ in $G-N[X]$.
\end{corollary}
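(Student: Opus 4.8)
The plan is to contract the connected set $X$ to a single vertex and then apply the localized strengthening of our main result, \cref{th:doubleinduction}; the contraction step is exactly the situation that \cref{lem:mergingvertices} is designed to control.

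If $G-N[X]$ is $P_k$-free there is nothing to prove, so assume it contains an induced $P_k$. Since $G[X]$ is connected we form the (simple) graph $G'$ obtained from $G$ by identifying all of $X$ into a single new vertex $w$, so that $V(G')=(V(G)\setminus X)\cup\{w\}$ and $N_{G'}(w)=N_G(X)\setminus X$. A direct check gives $V(G')\setminus N_{G'}[w]=V(G)\setminus N_G[X]$, and on this common set $G'$ induces exactly $G-N[X]$, since none of its vertices is adjacent to $w$ or lies in $X$. In particular $G'-N_{G'}[w]$ still contains an induced $P_k$, so $G'$ is not $P_k$-free.

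I would then invoke \cref{lem:mergingvertices}, applied along the edges of a spanning tree of $G[X]$ so as to contract $X$ one edge at a time, to obtain the transfer statement: if an induced path $P$ with $V(P)\cap N_G[X]=\emptyset$ is avoidable in $G'$, then it is avoidable in $G$. Granting this, it suffices to exhibit an avoidable $P_k$ of $G'$ contained in $G'-N_{G'}[w]$: its vertex set, read back inside $G$, induces a $P_k$ disjoint from $N_G[X]$ which, by the transfer statement, is avoidable in $G$ — precisely the desired conclusion. Such a path in $G'$ is provided by \cref{th:doubleinduction} applied to $G'$ with $w$ as the distinguished vertex, because $G'-N_{G'}[w]$ is not $P_k$-free.

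Thus the whole content of the corollary sits in the transfer statement, that is, in \cref{lem:mergingvertices}; everything else is bookkeeping about neighbourhoods under contraction. The delicate point is that identifying $X$ to a single vertex may create, through $w$, an induced cycle of $G'$ certifying an extension of $P$ that does not lift verbatim to an induced cycle of $G$, and, dually, that an extension of $P$ meeting $N_G[X]$ must be shown non-failing in $G$ directly; in both cases one re-routes through the connected set $X$ and prunes back to an induced cycle still containing the relevant extension. I expect this routing-and-pruning to be the main obstacle, but it is exactly what \cref{lem:mergingvertices} packages, so the proof reduces to combining that lemma with \cref{th:doubleinduction}.
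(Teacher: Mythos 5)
Your proposal matches the paper's intended argument: the paper gives no written proof of this corollary, only the remark that it follows from the ingredients of \cref{th:doubleinduction}, namely \cref{lem:mergingvertices}, and that is exactly what you do — contract $X$ to a single vertex $w$, apply $H_R(G',k,w)$ to get an avoidable $P_k$ of $G'$ in $G'-N[w]=G-N[X]$, and transfer it back to $G$ one spanning-tree edge at a time. One caveat worth making explicit: the \emph{statement} of \cref{lem:mergingvertices} only transfers the property $H_R$, so iterating it as a black box terminates with $H_R(G,k,x_1)$ for a single vertex $x_1\in X$, i.e.\ an avoidable $P_k$ of $G$ in $G-N[x_1]$, which is weaker than locating it in $G-N[X]$. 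What you actually need is the path-level transfer you formulate (an avoidable $P_k$ of the merged graph lying outside the closed neighbourhood of the merged vertex is itself avoidable in the unmerged graph); this is precisely what the \emph{proof} of \cref{lem:mergingvertices} establishes at each single merge, but it is strictly stronger than the lemma's stated conclusion, so you should either cite that proof rather than the lemma or run the induction on $|X|$ repeating its argument. Since you state the required transfer explicitly and it does follow from the lemma's argument, your proof is correct.
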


\begin{corollary}\label{cor:2indPk}
    For every positive integer $k$ and graph $G$, either $G$ does not contain two non-adjacent $P_k$, or it contains two non-adjacent avoidable $P_k$.
\end{corollary}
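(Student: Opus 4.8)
The plan is to derive the statement from \cref{cor:connectedsubset} by a short two-step ``bouncing'' argument. Assume $G$ contains two non-adjacent induced paths $P$ and $Q$ on $k$ vertices, where \emph{non-adjacent} means that $V(P)$ and $V(Q)$ are disjoint and no edge joins them, i.e.\ $V(P) \cap N[V(Q)] = \emptyset$ (a symmetric condition). We must produce two non-adjacent avoidable $P_k$ in $G$.

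First I would apply \cref{cor:connectedsubset} with $X = V(Q)$; this is legitimate since $G[V(Q)]$ is connected, $Q$ being a path. Because $P$ is an induced $P_k$ with $V(P) \subseteq V(G) \setminus N[V(Q)] = V(G - N[V(Q)])$, the graph $G - N[V(Q)]$ is not $P_k$-free, so \cref{cor:connectedsubset} yields an avoidable $P_k$ of $G$, call it $R_1$, with $V(R_1) \subseteq V(G - N[V(Q)])$. In particular $R_1$ and $Q$ are non-adjacent.

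Next I would repeat the argument with $R_1$ now playing the role of $Q$: apply \cref{cor:connectedsubset} with $X = V(R_1)$, which is again connected. Since $R_1$ and $Q$ are non-adjacent, $Q$ is an induced $P_k$ contained in $G - N[V(R_1)]$, so that graph is not $P_k$-free, and we obtain an avoidable $P_k$ of $G$, call it $R_2$, with $V(R_2) \subseteq V(G - N[V(R_1)])$. Then $R_1$ and $R_2$ are both avoidable in $G$ and non-adjacent, which is what we want.

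The argument is short once \cref{cor:connectedsubset} is in hand, and I do not anticipate a genuine obstacle; the only point needing care is recognising that a single application does not suffice — it produces a path non-adjacent to $Q$, but $Q$ itself may fail to be avoidable — so one has to ``bounce back'' and invoke the corollary a second time, now anchored at the avoidable path $R_1$, using the surviving copy $Q$ as a witness that the corresponding deleted subgraph still contains a $P_k$.
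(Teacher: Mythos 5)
Your proposal is correct and follows essentially the same two-step argument as the paper: apply \cref{cor:connectedsubset} anchored at one of the given paths to obtain a first avoidable $P_k$, then apply it again anchored at that avoidable path (using the original anchor as the witness that the deleted subgraph still contains a $P_k$) to obtain a second, non-adjacent avoidable $P_k$. No gaps.
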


Since \cref{cor:2indPk} is not as straightforward as its predecessor, we include a proof.

\begin{proof}
    Let $Q_1$ and $Q_2$ be two non-adjacent $P_k$.
    By \cref{cor:connectedsubset}, either $G - N[Q_1]$ is $P_k$-free or there is an avoidable $P_k$ of $G$ in $G - N[Q_1]$.
    The first outcome is ruled out by the existence of $Q_2$.
    Let $Q'_2$ be an avoidable $P_k$ of $G$ in $G - N[Q_1]$.
    We repeat the argument with $Q'_2$ instead of $Q_1$, and obtain an avoidable $P_k$ of $G$ in $G - N[Q'_2]$, call it~$Q'_1$.
    The two paths $Q'_1$ and $Q'_2$ are two non-adjacent avoidable $P_k$, as desired.
\end{proof}

We can also wonder:

\begin{question}\label{qu:2disjointPk}
For every positive integer $k$, does every graph $G$ either not contain two disjoint $P_k$, or contain two disjoint avoidable $P_k$?
\end{question}

\def\k{2}
We know the answer to be positive in the case $k \in \{1,2\}$, due to~\cite[Theorems 3.3 and 6.4]{beisegel2019avoidable}.
The answer turns out to be negative in all other cases, as exhibited in the following counter-example for $k\geq 3$, which consists of a cycle on $2k-1$ vertices with an added vertex adjacent to two consecutive vertices on the cycle (see \cref{fig:counter-example-disjoint} for the case $k = 3$).
This graph contains two disjoint $P_k$, and it has $2k$ vertices, so any two disjoint $P_k$ are in fact complementary in the graph.
Suppose that it contains two disjoint avoidable $P_k$, and note that each intersects the triangle (otherwise the complement would not be a path).
Since there are three vertices in the triangle, there is an avoidable $P_k$ containing a single vertex in the triangle.
This $P_k$ has a failing extension, a contradiction.

\begin{figure}[ht]
    \centering

\begin{tikzpicture}[scale = 0.9]

\def\nodesc{0.9}
\def\r{1.9}
\def\a{180}
\def\lw{1.8pt}
\def\lwb{.9pt}
\def\lwc{4.9pt}

\pgfmathsetmacro\Ntemp{2*\k};
\pgfmathtruncatemacro\N{round(\Ntemp)}
\pgfmathsetmacro\Km{\k-1};
\pgfmathtruncatemacro\kmi{round(\Km)};
\pgfmathsetmacro\Ntot{\N+1};
\pgfmathsetmacro\kpi{\k+1};
\pgfmathtruncatemacro\kp{round(\kpi)};

\foreach \i in {0,...,\N}
{
	\ifthenelse{\i < \kp}
	{\def\c{blue!15}}
	{\def\c{red!15}}
	\node[scale = \nodesc, nodes={\c}{}{}{},line width = \lwb] (u\i) at (\i*360/\Ntot+\a:\r) {};
}

\node[scale = \nodesc, nodes={red!15}{}{}{},line width = \lwb] (uf) at (0:0) {};

\foreach \i in {0,...,\kmi}
{
	
	\pgfmathsetmacro\ipi{1+\i};
	\pgfmathtruncatemacro\ip{round(\ipi)};
	\draw[line width = \lw, blue!60] (u\i) to (u\ip);
	
	\pgfmathsetmacro\ibistemp{\i+\k+1};
	\pgfmathtruncatemacro\ibis{round(\ibistemp)};
	\pgfmathsetmacro\ibisptemp{1+\ibis};
	\pgfmathtruncatemacro\ibisp{round(\ibisptemp)};
	
	\ifthenelse{\i < \kmi}
	{\draw[line width = \lw, red!60] (u\ibis) to (u\ibisp);}
	
}

\draw[line width = \lwb] (u0) to (u\N);
\draw[line width = \lwc, green!25]  (u\k) to (u\kp);
\draw[line width = \lwb] (u\k) to (u\kp);
\draw[line width = \lwc, green!25] (u0) to (uf);
\draw[line width = \lwb] (u0) to (uf);
\draw[red!60, line width=\lw](u\N) to (uf);

\end{tikzpicture}
    \caption{A graph that contains two disjoint $P_3$ (in blue and in red) but no two disjoint avoidable $P_3$ (there is a unique partition into two disjoint $P_3$, up to symmetry).
    In green, a failing extension of the blue path.}
    \label{fig:counter-example-disjoint}
\end{figure}
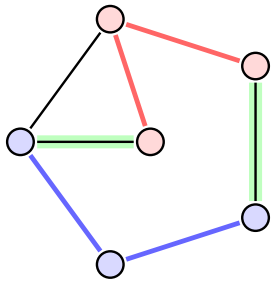

In \cref{sec:algorithm}, we present a concise algorithm which follows the proof of \cref{th:doubleinduction}.
As discussed there, the algorithm has complexity $O(n^{k+2})$ which, while naive, is the right order of magnitude under ETH. 

\section{A stronger induction hypothesis}\label{sec:proof}

All graphs considered in this paper are finite, simple and loopless.
Given a graph $G$, we denote by $V(G)$ its set of vertices, and by $E(G)\subseteq \{\{x,y\} \mid x,y\in V(G),\ x\neq y\}$ its set of edges.  
Edges are denoted by $uv$ (or $vu$) instead of $\{u,v\}$.
If $uv$ is an edge, then we say that $u$ and $v$ are \emph{adjacent}.
Given a vertex $u$, the neighbourhood $N(u)$ of $u$ is the set of vertices of $G$ that are adjacent to $u$. 
The closed neighbourhood $N[u]$ of $u$ is the set $N(u)\cup \{u\}$. 
If $X\subseteq V(G)$, then we define $N[X] \coloneqq \bigcup_{x\in X} N[x]$ and $N(X) \coloneqq N[X]\setminus X$. 
The subgraph of $G$ \emph{induced} by $X$, denoted by $G[X]$, is the graph $(X,E(G)\cap \{\{x,y\} \mid x,y\in X,\ x\neq y\})$, and $G-X$ is the graph $G[V(G)\setminus X]$.
Given two adjacent vertices $u_{1}$ and $u_{2}$ of $G$, the graph obtained by \emph{merging} $u_1$ and $u_{2}$ is the graph obtained from $G$ by replacing $u_{1}$ and $u_{2}$ with a new vertex $u$ such that $N(u)=N(\{u_1,u_2\})$.
Given a graph $G$ and two subsets $X$ and $Y$ of $V(G)$, we say that $X$ \emph{dominates} $Y$ if every vertex of $Y \setminus X$ has a neighbour in $X$ (equivalently, if $Y \subseteq N[X]$).
\medskip

We first define two useful properties.
\begin{definition}[Basic property $H_B$]\label{def:HB}
    Given a positive integer $k$ and a graph $G$, the property $H_B(G,k)$ holds if either $G$ is $P_k$-free or there is an avoidable $P_k$ in $G$.
\end{definition}

\begin{definition}[Refined property $H_R$]\label{def:HR}
    Given a positive integer $k$, a graph $G$ and a vertex $u \in V(G)$, the property $H_R(G,k,u)$ holds if either $G-N[u]$ is $P_k$-free or there is an avoidable $P_k$ of $G$ in $G-N[u]$.
    
    Given a positive integer $k$ and a graph $G$, the property $H_R(G,k)$ holds if $H_R(G,k,u)$ holds for every $u \in V(G)$.
\end{definition}

Note that property $H_R$ does not directly imply property $H_B$. 
We also emphasise the fact that an avoidable path in a subgraph is not necessarily an avoidable path in the whole graph.

We now prove a form of heredity in $H_R$.

\begin{lemma}\label{lem:mergingvertices}
    Let $k$ be a positive integer, $G$ a graph and $u_1u_2$ an edge of $G$.
    Let $G'$ be the graph obtained from $G$ by merging the two vertices $u_1$ and $u_2$ into one vertex $u$. 
    If $G' - N[u]$ contains a $P_k$, then $H_R(G',k,u)$ implies $H_R(G,k,u_1)$. 
\end{lemma}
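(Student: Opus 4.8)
The plan is to exhibit a single path that serves as the desired avoidable $P_k$ of $G$. Since $G' - N[u]$ contains a $P_k$ and $H_R(G',k,u)$ holds, the ``$P_k$-free'' alternative of $H_R(G',k,u)$ fails, so we may fix an avoidable $P_k$ of $G'$ in $G' - N[u]$, call it $Q$. First I would record why $Q$ is a legitimate candidate in $G$. Writing $S$ for the vertex set $V(G') \setminus N[u]$ (neighbourhood taken in $G'$), the defining equality $N(u) = N(\{u_1,u_2\})$ immediately gives $S = V(G) \setminus (N[u_1] \cup N[u_2])$ (neighbourhoods now in $G$); in particular $V(Q) \subseteq S \subseteq V(G) \setminus N[u_1]$. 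Moreover $u_1,u_2 \notin S$ and merging $u_1$ and $u_2$ changes only edges incident with $u_1$ or $u_2$, so $G[S] = G'[S]$ and $Q$ is also an induced $P_k$ of $G$. Hence it suffices to show that $Q$ has no failing extension in $G$: then $Q$ is an avoidable $P_k$ of $G$ in $G - N[u_1]$, which yields $H_R(G,k,u_1)$.

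So let $xQy$ be an arbitrary extension of $Q$ in $G$; the task is to place it inside an induced cycle of $G$. Note first that $x,y \notin \{u_1,u_2\}$, since each of $x,y$ is adjacent to an endpoint of $Q$ while no vertex of $S \supseteq V(Q)$ is adjacent to, or equal to, $u_1$ or $u_2$. Therefore every vertex of $xQy$ belongs to $V(G) \setminus \{u_1,u_2\} = V(G') \setminus \{u\}$, a set on which $G$ and $G'$ induce the same subgraph; so $xQy$ is an induced path of $G'$ as well, that is, an extension of $Q$ in $G'$. Since $Q$ is avoidable in $G'$, this extension lies in some induced cycle $C$ of $G'$.

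The heart of the argument is transporting $C$ back to $G$. If $u \notin V(C)$, then $C$ is already an induced cycle of $G$ containing $xQy$ and we are done, so assume $u \in V(C)$ and let $a,b$ be the two neighbours of $u$ on $C$. Since $u \notin \{x,y\} \cup V(Q)$, the path $xQy$ lies inside $C-u$; in particular $|V(C)| \ge (k+2)+1 \ge 4$, so $a$ and $b$ are non-consecutive on $C$ and hence non-adjacent in $G'$ (thus in $G$). Each of $a,b$ lies in the $G'$-neighbourhood of $u$, namely $N(\{u_1,u_2\}) \setminus \{u_1,u_2\}$, so is adjacent in $G$ to $u_1$ or to $u_2$; and any vertex of $V(C) \setminus \{u,a,b\}$ is a $G'$-non-neighbour of $u$, hence, being distinct from $u_1,u_2$, lies outside $N(\{u_1,u_2\})$ and is a non-neighbour in $G$ of both $u_1$ and $u_2$. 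Using the edge $u_1u_2$, I would now replace the vertex $u$ on $C$ by: $u_1$, if $a$ and $b$ are both adjacent in $G$ to $u_1$; $u_2$, if both are adjacent to $u_2$; and otherwise (after possibly swapping $a$ and $b$, so that $a$ is adjacent to $u_1$ only and $b$ to $u_2$ only) by the two-vertex path $u_1u_2$, inserted as $a-u_1-u_2-b$. In each case the two facts just stated ensure that the resulting object $C'$ is a cycle with no chord, that is, an induced cycle of $G$, and $C'$ still contains $xQy$ because that path lies in $C-u$. Thus $xQy$ is not failing in $G$, as required.

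I expect the transport step to be the only real obstacle: one must check that, after substituting $u_1$, $u_2$, or the edge $u_1u_2$ for $u$, no new chord appears and a genuine cycle results, and it is exactly there that the edge $u_1u_2$, the ``no neighbour of $u_1$ or $u_2$ outside $\{a,b\}$ on $C$'' property, and the bound $|V(C)| \ge 4$ (which excludes an $ab$-chord in the path-substitution case) are used. The other ingredients — the identity $S = V(G) \setminus (N[u_1] \cup N[u_2])$, the impossibility of $x,y \in \{u_1,u_2\}$, and the agreement of $G$ and $G'$ away from the merged vertex — are routine bookkeeping.
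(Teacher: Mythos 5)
Your proposal is correct and follows essentially the same route as the paper: take an avoidable $P_k$ of $G'$ in $G'-N[u]$, observe that any extension of it in $G$ avoids $u_1,u_2$ and is therefore an extension in $G'$, lift the resulting induced cycle $C$ of $G'$ back to $G$ by substituting $u_1$, $u_2$, or the edge $u_1u_2$ for $u$. The only difference is presentational (direct rather than by contradiction), and you spell out the case analysis for the substitution step that the paper compresses into ``as appropriate''.
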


\begin{proof}
    Suppose $G' - N[u]$ contains a $P_k$, and that $H_R(G',k,u)$ holds but not $H_R(G,k,u_1)$.
    Since $G'-N[u]$ is not $P_k$-free, there is an avoidable $P_k$ of $G'$ in $G'-N[u]$.
    Call it $Q$.
    The path $Q$ is contained in $G'- N[u]=G - N[\{u_1,u_2\}]$, so in particular in $G -  N[u_1]$.
    Since $H_R(G,k,u_1)$ does not hold, $Q$ is not an avoidable $P_k$ of $G$.
    Thus, there is a failing extension $xQy$ of $Q$ in $G$.
    Note that $x, y, u_1$, and $u_2$ are all pairwise distinct.

    Hence, $xQy$ is an extension of $Q$ in $G'$, and there is an induced cycle $C$ in $G'$ containing the path $xQy$.
    If $u \not\in C$, then the cycle $C$ is also an induced cycle in $G$ containing $xQy$, a contradiction.
    Therefore, $u \in C$.
    By replacing $u$ with either $u_1$, $u_2$ or the edge $u_1u_2$ as appropriate, we obtain an induced cycle in $G$ containing $xQy$, a contradiction.
\end{proof}

We are now ready to prove the main technical result of this paper.

\begin{theorem}\label{th:doubleinduction}
    For every positive integer $k$ and every graph $G$, both properties $H_B(G,k)$ and $H_R(G,k)$ hold.
\end{theorem}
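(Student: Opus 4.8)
The plan is to prove both statements together by induction on $|V(G)|$ (simultaneously for all $k\ge 1$), establishing $H_R(G,k)$ first and then reading off $H_B(G,k)$ at the same size. The base case -- say $|V(G)|\le 1$ -- is immediate, since then $G$ is $P_k$-free for every $k$.

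For the inductive step I would first handle $H_R(G,k,u)$ for a fixed $u$. If $G-N[u]$ is $P_k$-free we are done, so suppose $H\coloneqq G-N[u]$ contains a $P_k$. If some neighbour $u'$ of $u$ is such that the graph $G'$ obtained by merging $u$ and $u'$ into $u^*$ still has $G'-N[u^*]$ containing a $P_k$ (note $G'-N[u^*]$ is exactly $H$ with the neighbours of $u'$ deleted), then $H_R(G',k,u^*)$ holds by induction since $|V(G')|<|V(G)|$, and \cref{lem:mergingvertices} upgrades this to $H_R(G,k,u)$. Otherwise -- this includes the case $u$ isolated -- every neighbour $u'$ of $u$ has $H-(N(u')\cap V(H))$ being $P_k$-free. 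Here I would use that $H$ is strictly smaller than $G$ and contains a $P_k$, so by induction $H_B(H,k)$ provides an avoidable $P_k$, say $R=r_1\cdots r_k$, of $H$, and claim that $R$ is already avoidable in $G$ (it obviously lies in $G-N[u]$). Indeed, a failing extension $xRy$ of $R$ in $G$ cannot stay inside the induced subgraph $H$; since $r_1$ and $r_k$ are non-adjacent to $u$, neither $x$ nor $y$ is $u$, so one of them, say $x=u'$, lies in $N(u)$. If $y\in N(u)$ too, then $u'\,r_1\cdots r_k\,y\,u$ is an induced cycle containing $xRy$ (the only point to verify is that $u$ sees none of $r_1,\dots,r_k$, which holds as $R\subseteq G-N[u]$) -- a contradiction. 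So $y\notin N(u)$, hence $y\in V(H)$, and then $r_2 r_3\cdots r_k\,y$ is an induced path on $k$ vertices avoiding $N[u']$ (in the induced path $xRy$ the vertex $u'=x$ sees only $r_1$), that is, a $P_k$ inside $H-(N(u')\cap V(H))$ -- contradicting $P_k$-freeness. Thus $R$ witnesses $H_R(G,k,u)$.

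Once $H_R(G,k)$ is in hand, $H_B(G,k)$ follows quickly: if $G$ has a $P_k$ and some vertex $w$ has $G-N[w]$ still containing a $P_k$, then $H_R(G,k,w)$ gives an avoidable $P_k$ of $G$ lying in $G-N[w]$, which is an avoidable $P_k$ of $G$; and if $G-N[w]$ is $P_k$-free for every $w$, then any $P_k$ $\,p_1\cdots p_k$ of $G$ is avoidable, because a failing extension $x\,p_1\cdots p_k\,y$ would make $x\,p_1\cdots p_{k-1}$ an induced $P_k$ inside $G-N[y]$.

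The crux, I expect, is precisely the ``otherwise'' case above: realising that when no neighbour-merge preserves a $P_k$, one should instead extract an avoidable $P_k$ from the strictly smaller graph $G-N[u]$ and then convert a hypothetical failing extension into either an induced cycle through $u$ or a $P_k$ that the case hypothesis forbids. Everything else -- the base case, the merging step via \cref{lem:mergingvertices}, and the derivation of $H_B$ -- is essentially bookkeeping; I would just check well-foundedness of the induction, which holds since both merging and passing to $G-N[u]$ strictly decrease $|V(G)|$, while the proof of $H_B(G,k)$ only ever invokes $H_R$ on $G$ itself, already settled.
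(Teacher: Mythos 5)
Your proposal is correct and follows essentially the same route as the paper: your Case 1 / Case 2 dichotomy is exactly the paper's claim that every $P_k$ in $G-N[u]$ dominates $N(u)$ (proved via \cref{lem:mergingvertices} and minimality), and your Case 2 argument lifting an avoidable $P_k$ of $G-N[u]$ to $G$ by turning a failing extension into either a cycle through $u$ or a forbidden $P_k$ is the paper's final step, with the roles of the two endpoints merely relabelled. The derivation of $H_B$ from $H_R$ is likewise identical; only the packaging (direct induction versus minimal counterexample) differs.
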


\begin{proof}
    Suppose the statement is false and consider a counter-example $G$ which is minimal with respect to the number of vertices.
    \begin{lemma}\label{lem:HR}
        The property $H_R(G,k)$ holds for every $k$.
    \end{lemma}
    \begin{proof}
        We proceed by contradiction.
        Suppose that $H_R(G,k,u)$ does not hold for some $k$ and some vertex $u \in V(G)$, that is, there exists a $P_k$ in $G-N[u]$, and every $P_k$ in $G-N[u]$ has a failing extension.
        We prove the following.
        
        \needspace{0.3in}
        \begin{claim}\label{cl:PkdominatesNu}
            Every $P_k$ in $G - N[u]$ dominates $N(u)$.
        \end{claim}
        
        \begin{proof}
            Assume towards a contradiction that there is a $P_k$ in $G - N[u]$, call it $Q$, which is not adjacent to some vertex $v \in N(u)$.
            Then $G  - N[\{u,v\}]$ contains a $P_k$. 
            Let $G'$ be the graph obtained from $G$ by merging $u$ and $v$ into a vertex $u'$.
            Since $G'$ has fewer vertices than $G$, the property $H_R(G',k,u')$ holds by minimality of $G$.
            By \cref{lem:mergingvertices}, the property $H_R(G,k,u)$ holds, a contradiction.
            \cqed
        \end{proof}
        
        Let $G' \coloneqq G-N[u]$. 
        Then $G'$ contains a $P_k$. 
        As $G'$ contains fewer vertices than~$G$, the property $H_B(G',k)$ holds. 
        Let $Q$ be an avoidable $P_k$ of $G'$. 
        By assumption, $Q$ is not an avoidable $P_k$ of $G$. 
        So there is a failing extension $xQy$ of $Q$ in $G$.
        Since $Q$ has no failing extension in $G'$, we can assume without loss of generality that $y \in N(u)$. 
        It follows that $x \not\in N(u)$: otherwise the cycle $x Q y u$ contradicts the fact that $x Q y$ is failing. 
        By definition of an extension, $x Q y$ is an induced path.
        Let $z$ be the only neighbour of $y$ in~$Q$, and let us now consider the path $x Q - z$. 
        It is a $P_k$, and it does not intersect $N[u]$.
        However, no vertex in it is adjacent to $y$ which lies in $N(u)$, contradicting \cref{cl:PkdominatesNu}.
        \lqed
    \end{proof}
    
    \begin{lemma}\label{lem:HB}
        The property $H_B(G,k)$ holds for every $k$.
    \end{lemma}
    
    \begin{proof}
        Assume towards a contradiction that for some $k$, the property $H_B(G,k)$ does not hold.
        By \cref{lem:HR}, the property $H_R(G,k,u)$ holds for every vertex $u \in V(G)$.
        In other words, the graph $G$ contains a $P_k$ but no avoidable $P_k$, and for every vertex $u \in V(G)$, either $G - N[u]$ is $P_k$-free or there is an avoidable $P_k$ of $G$ in $G - N[u]$. 
    
        We derive the following claim.
        \begin{claim}\label{cl:PkdominatesG}
            Every $P_k$ in $G$ dominates $V(G)$.
        \end{claim}
        \begin{proof}
            Suppose there is a $P_k$, call it $Q$, that does not dominate some vertex $u$ of $G$.
            Since $H_R(G,k)$ holds, either $G - N[u]$ is $P_k$-free or there is an avoidable $P_k$ of $G$ in $G - N[u]$.
            The first case contradicts the existence of $Q$, and the second contradicts the fact that $H_B(G,k)$ does not hold.
        \cqed
        \end{proof}
    
        Since $H_B(G,k)$ does not hold, $G$ contains a $P_k$, say $Q$, that is not avoidable.
        So it has a failing extension $xQy$.
        Let $z$ be the only neighbour of $y$ in $Q$, and consider the path $xQ-z$.
        It is an induced $P_k$ and none of its vertices is adjacent to $y$.
        This contradicts \cref{cl:PkdominatesG}.
        \lqed
    \end{proof}

    Finally, \cref{lem:HR,lem:HB} together contradict $G$ being a counter-example.
\end{proof}

\Cref{thm:existsavoidablePk} directly follows from \cref{th:doubleinduction}.

\section{An algorithm for \texorpdfstring{\cref{th:doubleinduction}}{Theorem 2.4}}\label{sec:algorithm}

By going through the proof and extracting the key ingredients, we obtain a straightforward algorithm verifying both properties (see \cref{alg:findPk}).

\begin{algorithm}[ht]
    \caption{finds an avoidable path of given length in a given graph, if any.}\label{alg:findPk}
    \begin{algorithmic}[1]
        \Procedure{FindAvoidablePathRefined}{$G,k,u$}
            \ForAll{$v \in N(u)$}
                \If{\Call{InducedPath}{$G-N[\{u,v\}],k$}$\,\neq \texttt{null}$}
                    \State $G' \gets G$ with $u$ and $v$ merged into $u'$
                    \State \Return \Call{FindAvoidablePathRefined}{$G',k,u'$}
                \EndIf
            \EndFor\vspace{-0.1cm}
            \State \Return \Call{FindAvoidablePath}{$G-N[u],k$}
        \EndProcedure
        \vspace{0.1cm}
        \Procedure{FindAvoidablePath}{$G,k$}
            \ForAll{$u \in V(G)$}
                \If{\Call{InducedPath}{$G-N[u],k$}$\,\neq \texttt{null}$}
                    \State \Return \Call{FindAvoidablePathRefined}{$G,k,u$}
                \EndIf
            \EndFor\vspace{-0.1cm}
            \State{\Return \Call{InducedPath}{$G,k$}}
        \EndProcedure
    \end{algorithmic}
\end{algorithm}

The algorithm uses the subprocedure \textsc{InducedPath} that, given a graph $G$ and a positive integer $k$, decides whether $G$ contains a $P_k$. 
If it does, the procedure returns one, otherwise it returns $\texttt{null}$.
The naive algorithm for that (testing all subsets of size $k$) has complexity $O(n^k)$. 
However, this is nearly optimal. 
Indeed, the problem of finding a $P_k$ in a given graph is W[1]-hard\footnote[1]{see e.g.~\cite{cygan2015parameterized} for definitions around complexity} when parametrised by $k$ (see~\cite[Ex.\@ 13.16, p.\@ 460]{cygan2015parameterized}). 
In fact, the hinted reduction has a linear blow-up, so it follows that there is no $f(k)\cdot n^{o(k)}$ algorithm under ETH.

Let $k$ be a positive integer, and let $B(n)$ (resp.~$R(n)$) be the worst case complexity of \textsc{FindAvoidablePath} (resp.~\textsc{FindAvoidablePathRefined}) on an $n$-vertex graph with parameter $k$.
We have $B(n) \leq n \cdot n^k + \max(R(n),n^k)$, and $R(n) \leq n \cdot n^k + \max(R(n-1),B(n-2))$.
We obtain $R(n) \leq n^{k+2}$ and $B(n) \leq n^{k+2}+n^{k+1}$.
While this may well be improved, the known limitations for finding an induced path on $k$ vertices also apply for an induced avoidable path on $k$ vertices (by \cref{thm:existsavoidablePk}, if the first exists, then so does the second).
Therefore, the order of magnitude of this naive algorithm is correct. 

Note that there is a yet more naive algorithm blindly checking for every subset of size $k$ if it corresponds to an avoidable path.
That algorithm has comparable complexity to ours (though slightly worse, at least at first sight).
However, we wanted to emphasise that our proof of \cref{th:doubleinduction} is constructive and yields an elementary algorithm.
Also, we believe that it provides an outline of the proof which might be helpful to the reader.

\section{Conclusion}\label{sec:ccl}

Given the discussions in \cref{subsec:cons}, it is tempting to ask when a graph admits three (or more) disjoint (resp.\@ pairwise non-adjacent) avoidable paths.
Note that though \cref{cor:connectedsubset} arms us with sufficient conditions for there to be more than two avoidable $P_k$, we do not believe that the corresponding sufficient conditions are necessary.
However, it seems the picture is murky already for chordal graphs.

It is tempting to wonder whether we can obtain another avoidable structure.
Though in some cases the very notion of extension becomes unclear (what should an extension of a clique be?), it does not seem like any other structure survives the test of chordal graphs or simple ad hoc constructions\textemdash even when allowing a family of graphs instead of fixing a single pattern (like a path on $k$ vertices).
This motivates us to formulate the following question.

\begin{question}\label{qu:otherstructure}
    Does there exist a family $\mathcal{H}$ of connected graphs, not containing any path, such that any graph is either $\mathcal{H}$-free or contains an avoidable element of $\mathcal{H}$?
\end{question}

The notion of avoidability in this context is deliberately left up to interpretation.

\section*{Acknowledgements}

We gratefully acknowledge support from Nicolas Bonichon and the Simon family for the organisation of the $4^{\textrm{th}}$ Pessac Graph Workshop, where this research was done. We are indebted to Micha\l \ Pilipczuk for providing helpful references regarding the complexity of finding an induced path of given length. Last but not least, we thank Peppie for her unwavering support during the work sessions.

\bibliography{pessacavoidable}

\end{document}